\definecolor{darkblue}{rgb}{0,0,.6}
\newtheorem{assumption}{Assumption}
\newtheorem{remark}{Remark}
\newtheorem{example}{Example}
\newtheorem{theorem}{Theorem}
\newtheorem{proposition}{Proposition}
\newtheorem{lemma}{Lemma}
\newtheorem{corollary}{Corollary}
\newcommand*{\latestversion}{\href{https://www.pietroemiliospini.com/publication/mte-misclassification/}{Click here for the latest version}} 
\newcommand*{\thisdraft}{This draft: November 2021} 
\newcommand*{\firstdraft}{First draft: June 2021}  
\begin{document}
\normalem

\title{MTE with Misspecification\thanks{We thank Vitor Possebom, Yixiao Sun, Kaspar Wuthrich, and seminar participants at UC San Diego for helpful comments. All remaining errors are ours. }}
\author{Juli\'an Mart\'inez-Iriarte \thanks{%
Email: jmart425@ucsc.edu} \\
Department of Economics\\
UC Santa Cruz \and  Pietro Emilio Spini \thanks{%
Email: pspini@ucsd.edu} \\
Department of Economics\\
UC San Diego }
\date{\latestversion \\
\thisdraft \\
\firstdraft}
\maketitle
\thispagestyle{empty}
\vspace{-2em}
\begin{abstract}\singlespacing
This paper studies the implication of a fraction of the population not responding to the instrument when selecting into treatment. We show that, in general, the presence of non-responders biases the Marginal Treatment Effect (MTE) curve and many of its functionals. Yet, we show that, when the propensity score is fully supported on the unit interval, it is still possible to restore identification of the MTE curve and its functionals with an appropriate re-weighting. 
\end{abstract}

\textbf{Keywords}: Marginal Treatment Effects, Misspecification, Weak Instruments.\bigskip

\clearpage
\pagenumbering{arabic} 

\section{Introduction}

Marginal treatment effects (MTEs) have unified the identification theory of several policy parameters. While the MTE framework is essentially non-parametric,\footnote{Linearity is sometimes assumed to facilitate estimation. See, e.g., Appendix B in \cite{urzua2006}} it is required that the recipient's participation into treatment follows a (generalized) Roy model. This is often referred to as additive separability: an ``additive'' comparison of costs and benefits determines selection. On the other hand, identification of the MTE is achieved via the local instrumental variable (LIV) approach (\cite{Heckman2001,Heckman2005}). An excellent survey is provided by \cite{mogstad2018}. 
An early effort to analyze MTE under misspecification can be found in the appendix of the seminal paper by \cite{Heckman2001}. They consider a case where the additive separability in the selection equation does not hold. The most serious consequence is that the LIV approach does not identify the MTE curve.  

In this paper we analyze a different type of misspecification. We model a situation in which, under additive separability, a proportion of the population does not take into account the instrumental variable when deciding whether to take up treatment or not. We refer to them as non-responders. To analyze the resulting bias, we define a pseudo-MTE curve which results from the LIV approach. Under no misspecification, the pseudo-MTE curve would coincide with the MTE curve. The resulting bias can be interpreted as a location-scale change of the MTE curve, parameterized by the proportion of non-responders and their propensity score.


We have two main results. The first one shows that the ability to recover the conditional average treatment effect (CATE) for the subpopulation of responders depends on the proportion of non-responders only through the support of the responders' propensity score. Indeed, when the support of the propensity score is the unit interval, it is possible to identify the CATE \emph{without} having to recover the true MTE curve in the first place. In a nutshell, ignoring misspecification and integrating under the pseudo-MTE curve over the support of observed propensity score yields the correct CATE for the subpopulation of responders. 

While the previous identification result for the CATE is independent of the proportion of non-responders,
this is not true of the MTE curve and other 
parameters derived from it such as LATE and MPRTE. However, in our second result, we show how to recover the MTE curve for responders by undoing the location-scale change induced by the presence of non-responders. The correction is based on an estimate of the support of the propensity score and requires only observable data. It gives an estimator of the policy parameter of interest that is simple to implement. Cases where the propensity score is fully supported are relevant in practice. For a recent example, see the survey approach of \cite{Briggs2020} the probability of having a child is supported on the full unit interval.  

Recently, \cite{kedadni2021} and \cite{vitor2021}
focus on the effect of measurement error in treatment status on the MTE curve. We complement such results by noting that a simple change to our setup can cover the case of misclassification. In a setting where treatment status is misclassified, the observed outcome is generated with the true treatment status. In our setting of misclassification, the observed outcome can be regarded as a mixture of responders and non-responders. The proportion of non-responders is analogous to the proportion of misreporters. Indeed, our results also hold if instead of having a fraction of non-responders, we have a fraction of misreporters.

Another consequence of the presence of non-responders in the sample is that the effect of the instrumental variable on the propensity score is attenuated. Motivated by this, 
we model a situation where the proportion of non-responders approaches 1, analogous to the setting of weak instruments of \cite{stockstaiger1997}. Thus, we can derive weak-instrument-like asymptotic distributions for the parameters derived from the MTE curve.

The rest of the paper is organized as follows: section \ref{sec:misc_and_mte} introduces the model; section \ref{sec:recover} contains the main identification results; section \ref{sec:bounds} provides bounds for the case where the propensity score is not fully supported in the unit interval; section \ref{sec:weak_iv} traces the connection to the weak IV literature; and section \ref{sec:conclusion} concludes. While this paper only deals with identification, we expect to extend our results to cover estimation and inference. 

\section{Misspecification and MTE}\label{sec:misc_and_mte}
In this section we introduce our model for misspecification in the MTE framework (\cite{Bjorklund1987}, \cite{Heckman2001,Heckman2005}). We analyze the consequences of misspecification from the identification point of view.
\subsection{The Model}

 We start with a general non-separable potential outcome model
\begin{align*}
Y(0)&=h_0(X,U_0),\\
Y(1)&=h_1(X,U_1),\\
Y&=D^*Y(1)+(1-D^*)Y(0),
\end{align*}
where $D^*$ is the observed treatment status, $X$ are observable covariates with support denoted by $\mathcal X$, and $\left\{Y(0),Y(1)\right\}, Y$ are potential and observed outcomes, respectively. The functions $h_0$ and $h_1$ are unknown. 

We model misspecification as a situation where there are two types of individuals: responders and non-responders. Responders select into treatment taking into account the incentives in Z. Their selection equation is given by $ D=\mathds 1\left\{ \mu(X,Z)\geq V\right\}$. On the other hand, non-responders do not react to incentives in Z at all. Their selection equation is given by $\tilde D=\mathds 1\left\{ \tilde \mu(X)\geq \tilde V\right\}$. Notice how $Z$ is not featured in $\tilde{\mu}(\cdot)$. For the non-responders, Z fails the relevance condition of the standard MTE model. 

Let $S$ be the latent status of an individual: $S=1$ for a responder and $S=0$ for a non-responder. The observed treatment status $D^*$ is given by:
\begin{align}\label{eq:mixture_d}
D^* = S \cdot D + (1-S) \cdot \tilde{D}.
\end{align}

We allow for the proportion of non-responders may vary with $X$. To this end, we define $\delta_X = \Pr(S=0|X)=\Pr(D^*= \tilde{D}|X)$. Thus, for every subpopulation with characteristics $X=x$ there is a proportion $\delta_x = \Pr(S=0|X=x) \in [0,1)$ of non-responders. We consider values where $\sup_{x \in \mathcal{X}} \delta_x < 1$ to avoid a situation where no-one responds to the instrumental variable.

\begin{remark}
We observe $Y$ according to $Y=D^*Y(1)+(1-D^*)Y(0)$, which is given by the actual choice $D^*$. If, instead, we have $Y=DY(1)+(1-D)Y(0)$, then we can interpret $D^*$ as a misclassified treatment status. In this case, all individuals decide according to $D=\mathds 1\left\{ \mu(X,Z)\geq V\right\}$, but a fraction of them reports according to $ \tilde D=\mathds 1\left\{ \tilde \mu(X)\geq \tilde V\right\}$ See \cite{kedadni2021} and \cite{vitor2021} for recent studies on MTE under misclassification. 
\end{remark}

The econometrician observes a cross section of $(Y_i, D^*_i, X_i, Z_i)$. When $\delta_X=0$ almost surely, then $D^*=D$ and we are in the familiar MTE framework of \cite{Heckman2001,Heckman2005}. Otherwise, if $\delta_X \neq 0$ almost surely, for an observation of $D^*_i$, we do not know whether we are observing the treatment status of a non-responder or of a responder. That is, it is unknown if we are observing $D_i$ or $\tilde D_i$. 

\begin{assumption} \textbf{Type Independence.}  \label{Assumption_type} $S\perp Z\| X$.
\end{assumption}

Assumption \ref{Assumption_type} states that once we control for $X$, the latent status of a individuals does not vary with the instrumental variable Z. 

\begin{assumption}
\textbf{Relevance and Exogeneity} \label{Assumption_heckman}
\begin{enumerate}%
\item \label{relevance}$\mu(X,Z)$ is a nondegenerate random variable
conditional on $X$.
\item \label{exogeneity} $(U_{0},U_{1},V, \tilde V)$ are independent of $Z$
conditional on $X$. 
\end{enumerate}
\end{assumption}

Note that, for the subpopulation of non-responders, the instrument is valid but totally irrelevant. The larger the value of $\delta_x$, the ``weaker'' the instrument $Z$, since most participants with $X=x$ are non-responders. With the exception of the requirement that $\tilde V\perp Z\| X$, these are the same conditions of \cite{Heckman2001,Heckman2005}. Our additional requirement covers the subpopulation of non-responders: neither the ``cost'' of treatment $\tilde V$ nor the ``benefit'' $\tilde \mu(X)$ depend on $Z$ when conditioned on $X$.

\begin{example}
To fix ideas, we can think of a two part cost of providing the incentive. A fixed cost associated to targeting a particular subpopulation with covariates $X=x$ and the cost of the incentive itself. If Z is a voucher, there could be administrative costs associated to making it available to subpopulation $X=x$. For non-responders who do not redeem the voucher, the cost of the incentive is zero. Such a scenario would satisfy Assumption \ref{Assumption_heckman}.
\end{example}

The misclassification structure of Equation \eqref{eq:mixture_d} allows to define three different propensity scores. An observed/identified one which is based on the observables $(D^*,X,Z)$, and two latent/unobserved propensity scores: one for the reponders and one for the non-responders. Formally, they are given by
\begin{align*}
P^*(X,Z)&:=\Pr(D^*=1|X,Z) & \textbf{(Observed)}\\
P(X,Z)&:=\Pr(D=1|S=1,X,Z) & \textbf{(Responders)} \\
\tilde P(X)&:=\Pr(\tilde D=1|S=0,X) & \textbf{(Non-responders)}
\end{align*}
The next result takes (mainly) advantage of Assumption \ref{Assumption_type} to derive a useful linear relation between them.
\begin{lemma}\label{lemma_prop_score}
Under Assumptions \ref{Assumption_type} and \ref{Assumption_heckman}.\ref{exogeneity} we can relate the different propensity scores by 
\begin{align}\label{eq:obs_ps}
P^*(X,Z)=(1-\delta_X)\cdot P(X,Z)+\delta_X \cdot\tilde P(X).
\end{align}
\end{lemma}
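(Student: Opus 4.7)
The plan is to compute $P^*(X,Z)$ by conditioning on the latent type $S$, use Assumption \ref{Assumption_type} to pull out the type probabilities as $\delta_X$ and $1-\delta_X$, and then use Assumption \ref{Assumption_heckman}.\ref{exogeneity} to show that the two type-specific conditional probabilities match $P(X,Z)$ and $\tilde P(X)$ respectively.

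Concretely, I would start from the mixture representation \eqref{eq:mixture_d} and write
\begin{align*}
P^*(X,Z) = E[D^* \,|\, X, Z] = E[S\cdot D \,|\, X, Z] + E[(1-S)\cdot \tilde D \,|\, X, Z].
\end{align*}
Applying the tower property by conditioning on $S$ inside each expectation yields
\begin{align*}
P^*(X,Z) = \Pr(S=1 \,|\, X, Z)\cdot \Pr(D=1 \,|\, S=1, X, Z) + \Pr(S=0 \,|\, X, Z)\cdot \Pr(\tilde D=1 \,|\, S=0, X, Z).
\end{align*}
Next, I would invoke Assumption \ref{Assumption_type} to replace $\Pr(S=s\,|\,X,Z)$ with $\Pr(S=s\,|\,X)$, producing the weights $1-\delta_X$ and $\delta_X$. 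The first conditional probability is $P(X,Z)$ by definition, so nothing more is required there.

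The last step, and the one place where some care is needed, is showing that $\Pr(\tilde D=1\,|\,S=0,X,Z) = \tilde P(X)$. Since $\tilde D = \mathds 1\{\tilde\mu(X)\geq \tilde V\}$ and $\tilde\mu$ does not depend on $Z$, this reduces to showing that the conditional distribution of $\tilde V$ given $(S=0,X,Z)$ does not depend on $Z$. This is where Assumption \ref{Assumption_heckman}.\ref{exogeneity} enters: together with Assumption \ref{Assumption_type}, the exogeneity of $\tilde V$ with respect to $Z$ given $X$ (read jointly with $S \perp Z \,\|\, X$) lets me drop $Z$ from the conditioning set and recover $\tilde P(X) = \Pr(\tilde V \leq \tilde\mu(X)\,|\,S=0,X)$. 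Plugging everything back delivers \eqref{eq:obs_ps}.

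The main obstacle is this last step: the two stated assumptions give $\tilde V \perp Z \,\|\, X$ and $S\perp Z\,\|\,X$ separately, but the argument really needs the joint statement $(\tilde V, S) \perp Z \,\|\, X$ in order to conclude that $\tilde V \,|\, S=0, X, Z$ has the same law as $\tilde V \,|\, S=0, X$. I would flag this in the write-up and either appeal to an implicit joint exogeneity (which is the natural reading of the Roy-type structure) or, if needed, justify it by noting that in the generalized Roy model $S$ and $\tilde V$ are baseline features of an individual that are jointly independent of the exogenous instrument $Z$ given $X$.
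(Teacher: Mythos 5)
Your proof follows essentially the same route as the paper's: decompose $\Pr(D^*=1\,|\,X,Z)$ by conditioning on the latent type $S$, use type independence to identify the mixing weights as $1-\delta_X$ and $\delta_X$, and use exogeneity to drop $Z$ from the non-responders' conditional probability. The subtlety you flag is real --- the paper's own proof also silently passes from conditioning on $(S=0,X,Z)$ to conditioning on $X$ alone, which strictly requires the joint statement $(\tilde V,S)\perp Z\,\|\,X$ rather than the two marginal independences stated in Assumptions \ref{Assumption_type} and \ref{Assumption_heckman}.\ref{exogeneity} --- so your explicit acknowledgment of the needed joint exogeneity is, if anything, more careful than the published argument.
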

\begin{proof}
Starting with the model in \eqref{eq:mixture_d} we can write
\begin{align*}
\Pr(D^*=1|X,Z)&=\Pr(S=1|X,Z)\cdot \Pr(D=1|S=1,X,Z)\\& + \Pr(S=0|X,Z)\cdot \Pr(\tilde D=1|S=0,X,Z).
\end{align*}
Assumption \ref{Assumption_type} simplifies the mixing probabilities to $\Pr(S=1|X)=1-\delta_X$ and $\Pr(S=0|X)=\delta_X$. We obtain
\begin{align*}
\Pr(D^*=1|X,Z)=(1-\delta_X)\cdot \Pr(D=1|S=1,X,Z) + \delta_X\cdot \Pr(\tilde D=1|S=0,X,Z).
\end{align*}
To see that $Pr(\tilde D=1|S=0,X,Z)=Pr(\tilde D=1|S=0,X)$, we note that By Assumptions \ref{Assumption_type} and \ref{Assumption_heckman}.\ref{exogeneity}: 
\begin{align*}
 \Pr(\tilde D=1|S=0,X,Z) =  \Pr(\tilde \mu(X)\geq\tilde V|S=0,X,Z)=\Pr(\tilde \mu(X)\geq\tilde V|X)= \Pr(\tilde D=1|S=0,X).
\end{align*}
Therefore
\begin{align*}
\Pr(D^*=1|X,Z)&=(1-\delta_X)\cdot \Pr(D=1|S=1, X,Z) + \delta_X\cdot \Pr(\tilde D=1|S=0,X)\\
&=(1-\delta_X)\cdot P(X,Z)+\delta_X \cdot\tilde P(X).
\end{align*}
\end{proof}

For a fixed $X=x$, the result in Lemma \ref{lemma_prop_score} shows that the observed propensity (still random through $Z$) is a linear transformation of the propensity score for the responders. If, additionally, we take two different values of $Z$, for example $z$ and $z'$, we can remove the contribution of $\tilde P(X)$, which is invariant with respect to $z$ and obtain\footnote{We write $P^*(x,z)$ for $\Pr(D^*=1|X=x,Z=z)$, and $P(x,z)$ for $\Pr(D=1|S=1,X=x,Z=z)$.}
\begin{align}\label{eq:disc_pz}
P^*(x,z)-P^*(x,z')=(1-\delta_x)\cdot \left[P(x,z)-P(x,z')\right]
\end{align}
Equation \eqref{eq:disc_pz} says that the changes on the observed propensity score induced by varying $Z$ are proportional to the changes on the true propensity score induced by varying $Z$. Thus, if we knew $\delta_x$, we could recover the change in the propensity score for the responders. When $Z$ is continuous, we can take a limiting version of this argument, \textit{e.g.}, as $z'\to z$, to obtain
\begin{align}\label{eq:der_pz}
\frac{\partial P^*(x,z)}{\partial z}=(1-\delta_x)\cdot \frac{\partial P(x,z)}{\partial z}.
\end{align}
Both the discrete (equation \eqref{eq:disc_pz}), and the continuous (equation\eqref{eq:der_pz}) change in the propensity score play a role in the relationship between the MTE curve (defined below) and certain parameters of interest.

\subsection{The MTE for Responders}
For the subpopulation of responders, the standard MTE framework holds. This motivates us to define an MTE curve for this subpopulation. In doing so, we are implicitly assuming that this is our object of interest. The reason for this is that many times we can also control the instrumental variable $Z$. Thus, to asses the effects of manipulations of $Z$ we look at the MTE curve for responders.

Let $\mathcal P_x$ and $\mathcal P^*_x$ denote the support of $P(x,Z):=\Pr(D=1|X=x,Z)$ and $P^*(x,Z):=\Pr(D^*=1|X=x,Z)$ respectively. For the subpopulation of responders, we rewrite the selection equation as $D=\mathds 1\left\{ P(X,Z)\geq U_D \right\}$ where $U_D\sim U_{(0,1)}$.\footnote{This follows from $D=\mathds 1\{ F_{V|S,X,Z}(\mu(X,Z)|1,X,Z)\geq F_{V|S,X,Z}(V|1,X,Z)\}$. Noting that by assumptions \ref{Assumption_heckman}.(\ref{exogeneity}) and \ref{Assumption_type}, we have $D=\mathds 1\{ P(X,Z)\geq F_{V|S,X}(V|1,X)\}$. Finally, we take $U_D:=F_{V|S,X}(V|1,X)$.} Thus, we define the MTE curve for responders as
$$\text{MTE}(u,x):=\mathbb E\left[Y(1)-Y(0)|S=1,U_D=u,X=x\right].$$ By the LIV approach we have the following equivalence result:\footnote{See \cite{Heckman2001} for sufficient conditions.}
\begin{equation}\label{eq:mte_liv}
\text{MTE}(u,x)=\frac{\partial \mathbb E\left[Y|S=1,P(X,Z)=u,X=x\right]}{\partial u}\text{ for }u\in \mathcal P_x.
\end{equation}
Since we do not observe $P(X,Z)$, this is \emph{not} an identification result in our setting. In a similar fashion, we \emph{define} the following pseudo-MTE curve:
\begin{align}\label{eq:pseudo_mte}
\text{MTE}^*(u,x;\delta_x):=\frac{\partial \mathbb E\left[Y|P^*(X,Z)=u,X=x\right]}{\partial u} \text{ for }u\in \mathcal P^*_x.
\end{align}

We emphasize that the pseudo-MTE curve is indexed by $\delta_x$ because it depends implicitly on the proportion of the nonresponders. From the data only, we can only compute $\text{MTE}^*(u,x;\delta_x)$, not $\text{MTE}(u,x)$. The pseudo-MTE curve is the curve that would be mistakenly taken to be the MTE curve. Indeed, in the absence of non-responders, $\text{MTE}^*(u,x;0)=\text{MTE}(u,x)$. If non-responders are present in the $X=x$ subpopulation, that is if $\delta_x> 0$, the observed $\text{MTE}^*(u,x;\delta_x)$ does not identify $\text{MTE}(u,x)$. In another words, the LIV approach is biased.
We can now fully characterize the bias induced by $\delta_x$ on the MTE curve. 
\begin{lemma}\label{lemma:bias_mte}
Under Assumptions \ref{Assumption_type} and \ref{Assumption_heckman}, we can write
\begin{align}\label{eq:equivalence_2}
\text{MTE}(v,x)=(1-\delta_x) \text{MTE}^*\left ( (1-\delta_x)v+\delta_x\tilde P(x),x;\delta_x\right ) \text{ for }v\in \mathcal P_x.
\end{align}
\end{lemma}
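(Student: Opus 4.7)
The plan is to decompose $\mathbb{E}[Y \mid P^*(X,Z)=u, X=x]$ along the latent type $S \in \{0,1\}$, argue that the non-responder piece is constant in $u$, use Lemma~\ref{lemma_prop_score} to re-express the responder piece through the true propensity score $P(X,Z)$, and finally invoke the LIV equivalence \eqref{eq:mte_liv} together with the chain rule. The first step uses Assumption~\ref{Assumption_type}: since, conditional on $X$, $P^*(X,Z)$ is a measurable function of $Z$ alone, we have $\Pr(S=1 \mid P^*(X,Z)=u, X=x) = \Pr(S=1 \mid X=x) = 1-\delta_x$. Law of total expectation then gives
\[
\mathbb{E}[Y \mid P^* = u, X=x] = (1-\delta_x)\, \mathbb{E}[Y \mid S=1, P^*=u, X=x] + \delta_x\, \mathbb{E}[Y \mid S=0, P^*=u, X=x].
\]

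For the $S=0$ branch I would show that $Y \perp Z$ given $(S=0,X)$. This is because $D^* = \tilde D = \mathds{1}\{\tilde \mu(X) \geq \tilde V\}$ does not involve $Z$, and $(Y(0),Y(1))$ are independent of $Z$ given $X$ by Assumption~\ref{Assumption_heckman}.\ref{exogeneity}. Consequently $\mathbb{E}[Y \mid S=0, P^*=u, X=x] = \mathbb{E}[Y \mid S=0, X=x]$ is a constant in $u$ and vanishes upon differentiation. For the $S=1$ branch, Lemma~\ref{lemma_prop_score} applied at $\{S=1, X=x\}$ shows that the event $\{P^*(X,Z)=u\}$ coincides with $\{P(X,Z) = v\}$ for $v := (u - \delta_x \tilde P(x))/(1-\delta_x)$, so $\mathbb{E}[Y \mid S=1, P^*=u, X=x] = \mathbb{E}[Y \mid S=1, P(X,Z)=v, X=x]$.

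Combining the branches yields $\mathbb{E}[Y \mid P^*=u, X=x] = (1-\delta_x)\, \mathbb{E}[Y \mid S=1, P(X,Z)=v, X=x] + \text{const}$. Differentiating in $u$ and applying the chain rule through the affine substitution $u \mapsto v$, the LIV equivalence \eqref{eq:mte_liv} identifies the responder derivative with $\text{MTE}(v,x)$, and inverting the substitution $u = (1-\delta_x)v + \delta_x \tilde P(x)$ delivers the stated location-scale identity. I expect the main obstacle to be conceptual rather than computational: it is the simultaneous verification that Assumption~\ref{Assumption_type} legitimately reduces $\Pr(S = s \mid P^*, X)$ to $\Pr(S=s \mid X)$ and that Exogeneity strips the residual $Z$-dependence from the $S=0$ branch. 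Both steps ultimately rest on the fact that, conditional on $X$, $P^*(X,Z)$ is a measurable function of $Z$ alone, which is precisely the structure already established in Lemma~\ref{lemma_prop_score}.
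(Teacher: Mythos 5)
Your decomposition is sound, and it is in fact more careful than the paper's own proof: the paper never splits $\mathbb{E}[Y\mid P^*(X,Z)=u,X=x]$ by the latent type $S$. It simply rewrites the conditioning event $\{P^*(X,Z)=u\}$ as $\{P(X,Z)=(u-\delta_x\tilde P(x))/(1-\delta_x)\}$ via Lemma \ref{lemma_prop_score}, differentiates in $u$, and obtains the leading factor $(1-\delta_x)^{-1}$ from the Jacobian of the affine substitution alone, identifying $\partial_v\,\mathbb{E}[Y\mid P(X,Z)=v,X=x]$ directly with $\text{MTE}(v,x)$.

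The difficulty is that your argument, carried to its conclusion, does not deliver the stated identity. From your penultimate display,
\[
\mathbb{E}[Y\mid P^*(X,Z)=u,X=x]=(1-\delta_x)\,\mathbb{E}[Y\mid S=1,P(X,Z)=v,X=x]+\text{const},\qquad v=\frac{u-\delta_x\tilde P(x)}{1-\delta_x},
\]
differentiating in $u$ gives $\text{MTE}^*(u,x;\delta_x)=(1-\delta_x)\cdot\frac{dv}{du}\cdot\text{MTE}(v,x)=(1-\delta_x)\cdot\frac{1}{1-\delta_x}\cdot\text{MTE}(v,x)=\text{MTE}(v,x)$: the mixture weight $(1-\delta_x)$ exactly cancels the Jacobian $1/(1-\delta_x)$, so your route yields $\text{MTE}(v,x)=\text{MTE}^*\bigl((1-\delta_x)v+\delta_x\tilde P(x),x;\delta_x\bigr)$ with no leading factor, not Equation \eqref{eq:equivalence_2}. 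The last sentence of your proposal therefore does not follow from the displays preceding it. The source of the discrepancy is precisely the point you flag as the conceptual obstacle: the LIV identity \eqref{eq:mte_liv} defines $\text{MTE}(v,x)$ as the $v$-derivative of $\mathbb{E}[Y\mid S=1,P(X,Z)=v,X=x]$, conditional on $S=1$, whereas the paper differentiates $\mathbb{E}[Y\mid P(X,Z)=v,X=x]$ without that conditioning; by your own (correct) computation these two derivatives differ by exactly the factor $(1-\delta_x)$, since the non-responder branch is flat in $v$. You must either reconcile your decomposition with the paper's step --- that is, explain why $\partial_v\,\mathbb{E}[Y\mid P(X,Z)=v,X=x]$ should equal $\text{MTE}(v,x)$ rather than $(1-\delta_x)\text{MTE}(v,x)$ --- or accept that your route establishes a different (unscaled) identity, which contradicts the one asked for whenever $\delta_x>0$. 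As written, the proposal does not prove the lemma.
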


\begin{proof}
Using \eqref{eq:obs_ps}, for $u\in \mathcal P^*_x$, we can write
\begin{align*}
\mathbb E\left[Y|P^*(X,Z)=u,X=x\right] &= \mathbb E\left[Y|(1-\delta_x)\cdot P(X,Z)+\delta_x\cdot\tilde P(X)=u,X=x\right] \\
&= \mathbb E\left[Y\bigg | P(X,Z)=\frac{u-\delta_x \tilde P(x)}{1-\delta_x},X=x\right] 
\end{align*}
Differentiating with respect to $u$, we obtain
\begin{align}\label{eq:mte_mte}
\text{MTE}^*(u,x;\delta_x)=\frac{1}{1-\delta_x} \text{MTE}\left (\frac{u-\delta_x \tilde P(x)}{1-\delta_x},x\right ) \text{ for }u\in \mathcal P^*_x.
\end{align}
since $\frac{u-\delta_x \tilde P(x)}{1-\delta}\in \mathcal P_x$ by \eqref{eq:obs_ps}. Alternatively, we can write
\begin{align*}
\text{MTE}(v,x)=(1-\delta_x) \text{MTE}^*\left ( (1-\delta_x)v+\delta_x\tilde P(x),x;\delta_x\right ) \text{ for }v\in \mathcal P_x.
\end{align*}
\end{proof}

Lemma \ref{lemma:bias_mte} shows that the bias is in the form of both location and scale. Equation \eqref{eq:mte_mte}, which is equivalent to Equation \eqref{eq:equivalence_2},\footnote{Note the changes in the domain of integration between \eqref{eq:equivalence_2} and \eqref{eq:mte_mte}.} shows that $\text{MTE}^*$ is obtained by changing the location from $u$ to $u-\delta_x\tilde P(x)$, and rescaling by $(1-\delta_x)^{-1}$. Thus, as in a location-scale family of densities, we can regard $\text{MTE}^*$ as a family of curves, defined over $\mathcal P^*_x$, which is indexed by $\delta_x$ and $\tilde P(x)$. 

\section{Automatic and explicit de-biasing}\label{sec:recover}
We now introduce our two main results. We show that, for any subpopulation $X=x$ where the instrument is strong enough to induce a propensity score supported on the full unit interval $[0,1]$, the associated $CATE(x)$ can be identified for responders. This is true even if the $MTE^*(u,x,\delta_x)$ curve is biased for $MTE(u,x)$. We note that the identified $CATE(x)$ parameters corresponds to the subpopulation of responders. 

\begin{assumption}\label{full_support}\textbf{Full Support.} The support of $P(x,Z)$ is $\mathcal P_x=[0,1]$ for every $x$ in a subset $\mathcal{X}_B \subseteq \mathcal{X}$.
\end{assumption}

Assumption \ref{full_support} says that the incentive in the instrument $Z$ is strong enough to induce any individual in the $X=x$ subpopulation into or out of treatment.
Perhaps surprisingly, the $\text{CATE}(x)$, can be recovered 
only by resorting to the full support assumption. That is, to correctly compute the $\text{CATE}(x)$ we do not need to recover the true MTE curve for responders.
\begin{theorem}\label{theorem:cate}
Let Assumptions \ref{Assumption_type}, \ref{Assumption_heckman}, and \ref{full_support} hold. Then, for any $x \in \mathcal{X}_B$:
\begin{align*}
\text{CATE}(x) =\int_{\inf \mathcal P^*_x}^{\sup \mathcal P^*_x} \text{MTE}^*(u,x;\delta_x)du.
\end{align*}
\end{theorem}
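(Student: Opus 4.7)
The plan is to reduce the claim to the classical result that, on the responder subpopulation, $\text{CATE}(x)=\int_0^1 \text{MTE}(v,x)\,dv$ whenever $\mathcal P_x=[0,1]$, and then transport that integral through the location-scale relation from Lemma \ref{lemma:bias_mte}. Since the responders satisfy the standard MTE assumptions (Assumption \ref{Assumption_heckman} restricted to $S=1$, plus $U_D$ uniform on $(0,1)$), the Heckman--Vytlacil identification under full support yields
\begin{equation*}
\text{CATE}(x)=\mathbb E[Y(1)-Y(0)\mid S=1,X=x]=\int_0^1 \text{MTE}(v,x)\,dv,
\end{equation*}
so it suffices to show that the right-hand side equals $\int_{\inf \mathcal P^*_x}^{\sup \mathcal P^*_x}\text{MTE}^*(u,x;\delta_x)\,du$.

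Next I would pin down the observed support $\mathcal P^*_x$. By Lemma \ref{lemma_prop_score} we have $P^*(x,Z)=(1-\delta_x)P(x,Z)+\delta_x\tilde P(x)$; applying this affine transformation to $\mathcal P_x=[0,1]$ (Assumption \ref{full_support}) gives
\begin{equation*}
\inf\mathcal P^*_x=\delta_x\tilde P(x),\qquad \sup\mathcal P^*_x=(1-\delta_x)+\delta_x\tilde P(x).
\end{equation*}

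Then I would apply the change of variables $v=\bigl(u-\delta_x\tilde P(x)\bigr)/(1-\delta_x)$, so $du=(1-\delta_x)\,dv$, mapping $[\inf\mathcal P^*_x,\sup\mathcal P^*_x]$ onto $[0,1]$. Substituting equation \eqref{eq:mte_mte} from Lemma \ref{lemma:bias_mte} gives
\begin{equation*}
\int_{\inf\mathcal P^*_x}^{\sup\mathcal P^*_x}\!\text{MTE}^*(u,x;\delta_x)\,du=\int_0^1 \frac{1}{1-\delta_x}\,\text{MTE}(v,x)\,(1-\delta_x)\,dv=\int_0^1 \text{MTE}(v,x)\,dv,
\end{equation*}
which equals $\text{CATE}(x)$. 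The scale factor $(1-\delta_x)^{-1}$ in $\text{MTE}^*$ is exactly cancelled by the Jacobian of the change of variables, which is why $\delta_x$ and $\tilde P(x)$ drop out even though neither is separately known.

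There is no real obstacle beyond bookkeeping: the only subtle point is that one must invoke full support \emph{for the responders} (Assumption \ref{full_support} is stated on $P$, not $P^*$) to guarantee that the transformed interval is precisely $[0,1]$, and hence that the responder CATE is indeed recovered without residual truncation. If $\mathcal P_x$ were a strict subinterval, the integral would only recover a local average effect over that subinterval, a distinction that motivates the bounding discussion in Section \ref{sec:bounds}.
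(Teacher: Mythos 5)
Your proposal is correct and follows essentially the same route as the paper's own proof: identify the endpoints of $\mathcal P^*_x$ as the affine image of $[0,1]$ under Lemma \ref{lemma_prop_score}, then substitute Equation \eqref{eq:mte_mte} and apply the change of variables $v=(u-\delta_x\tilde P(x))/(1-\delta_x)$ so the Jacobian cancels the scale factor. Your closing remark about full support being stated on $P$ rather than $P^*$ is a useful clarification but does not change the argument.
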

\begin{proof}
The Conditional Average Treatment Effect, $\text{CATE}(x)$, could be computed using the true MTE curve (if it was observed) as
\begin{align*}
\text{CATE}(x) =\int_0^1 \text{MTE}(u,x)du.
\end{align*}
Given that $\mathcal P_x=[0,1]$, then $\mathcal P^*_x:= [\underline{p_x^*} , \overline{p_x^*}]$ where $\underline{p_x^*}:=\inf \mathcal P^*_x=\delta_x\tilde P(x)$ and $\overline{p_x^*}:=\sup \mathcal P^*_x(1-\delta_x)+\delta_x\tilde P(x)]$. Consider the integrating the pseudo-MTE curve over the support of the observed propensity score:
\begin{align*}
\int_{\delta_x\tilde P(x)}^{(1-\delta_x)+\delta_x\tilde P(x)} \text{MTE}^*(u,x;\delta_x)du.
\end{align*}
Using \eqref{eq:mte_mte}, we have
\begin{align*}
\int_{\delta_x\tilde P(x)}^{(1-\delta_x)+\delta_x\tilde P(x)} \text{MTE}^*(u,x;\delta_x)du &=
\int_{\delta_x\tilde P(x)}^{(1-\delta_x)+\delta_x\tilde P(x)}  \frac{1}{1-\delta_x} \text{MTE}\left (\frac{u-\delta_x \tilde P(x)}{1-\delta_x},x\right )   du\\
&=\int_{0}^{1}\text{MTE}(u,x)du\\
&=\text{CATE}(x)
\end{align*}
where we have done the change of variables
\begin{align*}
v = \frac{u-\delta_x \tilde P(x)}{1-\delta_x}.
\end{align*}
\end{proof}
\begin{remark}
The result of Theorem \ref{theorem:cate} states that by integrating the observed (and biased) marginal treatment effect curve over the support of the observed (and biased) propensity score leads to the $\text{CATE}(x)$ provided that the propensity score for responders has full support. Thus, under the type of misspecification described in \eqref{eq:mixture_d}, $\text{CATE}(x)$ is robust to $\delta_x\neq 0$.
\end{remark}

\begin{remark}
This result also hold in a setting of misclassification and was our original motivation. That is, in a setting where
instead of $Y=D^*Y(1)+(1-D^*)Y(0)$, we have $Y=DY(1)+(1-D)Y(0)$ and we interpret $D^*$ as a misclassified treatment status. 
\end{remark}
Unfortunately, the automatic ``de-biasing" in Theorem \ref{theorem:cate} does not hold for the other policy parameters that can be obtained via the MTE curve. 
On the other hand, we show that the full support assumption can be used to identify $\delta_x$ which allows an explicit ``de-biasing" procedure. 
Given that $\mathcal P^*_x:= [\underline{p_x^*} , \overline{p_x^*} ] =[\delta_x\tilde P(x), (1-\delta_x)+ \delta_x\tilde P(x)]$ we can actually identify both $\delta_x$ and $\tilde P(x)$. It follows then from Lemma \ref{lemma:bias_mte} that we can recover the $\text{MTE}(u,x)$ curve. 

\begin{proposition}
\label{prop:identification of delta_x}
Let Assumptions \ref{Assumption_type}, \ref{Assumption_heckman}, and \ref{full_support} hold. Then $\delta_x$ is identified for any $x \in \mathcal{X}_B$ through:
\begin{equation*}
\delta_x = 1-(\overline{p_x^*} -\underline{p_x^*} )
\end{equation*}
\end{proposition}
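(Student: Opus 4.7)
The plan is to derive the result as a direct consequence of Lemma \ref{lemma_prop_score} combined with the full support assumption. The linear relationship in \eqref{eq:obs_ps} makes $P^*(x, Z)$ an affine transformation of $P(x, Z)$ for fixed $x$ (since $\delta_x$ and $\tilde P(x)$ do not vary with $Z$), and once we know the support of the latter is the full unit interval, we can read off the support of the former immediately.

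Concretely, I would proceed in three short steps. First, fix $x \in \mathcal{X}_B$ and rewrite Lemma \ref{lemma_prop_score} pointwise in $z$ as
\begin{equation*}
P^*(x, z) = (1-\delta_x) \, P(x, z) + \delta_x \, \tilde P(x),
\end{equation*}
noting that only $P(x, z)$ varies with $z$. Second, invoke Assumption \ref{full_support} to conclude that $P(x, Z)$ takes values arbitrarily close to $0$ and to $1$, so that the image of the map $p \mapsto (1-\delta_x) p + \delta_x \tilde P(x)$ over $p \in [0,1]$ equals $[\delta_x \tilde P(x), \, (1-\delta_x) + \delta_x \tilde P(x)]$. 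Since this affine map is continuous and monotonic (recall $\delta_x < 1$ by assumption), the support of $P^*(x,Z)$ is exactly this interval, giving
\begin{equation*}
\underline{p_x^*} = \delta_x \tilde P(x), \qquad \overline{p_x^*} = (1-\delta_x) + \delta_x \tilde P(x).
\end{equation*}
Third, subtracting yields $\overline{p_x^*} - \underline{p_x^*} = 1 - \delta_x$, so $\delta_x = 1 - (\overline{p_x^*} - \underline{p_x^*})$, and since both endpoints are identified from the distribution of $(D^*, X, Z)$, so is $\delta_x$.

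There isn't really a hard step here: the entire argument is a corollary of Lemma \ref{lemma_prop_score} once full support is imposed. The only thing to be careful about is the notational subtlety already flagged in the statement itself, namely that $\underline{p_x^*}$ and $\overline{p_x^*}$ are defined as the infimum and supremum of the support $\mathcal P^*_x$, not as values attained at specific $z$. Using monotonicity of the affine transformation and the closedness of $[0,1]$ takes care of this cleanly. As a side benefit, the same computation also identifies $\tilde P(x) = \underline{p_x^*}/\delta_x$ (when $\delta_x > 0$), which is exactly what is needed to plug into Lemma \ref{lemma:bias_mte} and carry out the explicit de-biasing of the MTE curve alluded to in the paragraph preceding the proposition.
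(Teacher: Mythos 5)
Your proposal is correct and follows essentially the same route as the paper: both read off the endpoints of the support of $P^*(x,Z)$ from the affine map $p \mapsto (1-\delta_x)p + \delta_x \tilde P(x)$ applied to $\mathcal P_x = [0,1]$, then subtract to isolate $1-\delta_x$. Your added care about monotonicity of the affine map and the distinction between attained values and the support's infimum/supremum is a minor tightening of the same argument, and the byproduct identification of $\tilde P(x) = \underline{p_x^*}/\delta_x$ also appears in the paper's proof.
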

\begin{proof}
According to Equation \eqref{eq:obs_ps}, the range of the observed propensity score is given by $\mathcal P^*_x=[\delta_x\tilde P(x), (1-\delta_x)+ \delta_x\tilde P(x)]$. For each $x$, the observed propensity score $P^*(\cdot)$ can be viewed as an affine function of $P(\cdot)$. This affine function is parameterized by $\delta_x$ and $\tilde{P}_x$. For the endpoints $\underline{p}_x$ and $\overline{p}_x$ of the true propensity score, we have the mappings:
\begin{align*}
\underline{p_x} \mapsto (1-\delta_x) \underline{p_x} + \delta_x \tilde{P}(x) \\
\overline{p_x} \mapsto (1-\delta_x) \overline{p_x} + \delta_x \tilde{P}(x) 
\end{align*}
The images of this collection of mapping are observed. They are the endpoints of the observed propensity score $P^*(Z,x)$. If the original endpoints of the true $P(\cdot)$ are known to be $\underline{p_x} = 0$ and $\overline{p_x} =1$, like stated in Assumption \ref{full_support}, the mapping above can be recovered by the following system of two equations in two unknowns: $\tilde P(x)$ and $\delta_x$.
\begin{align*}
\underline{p_x^*} &= \delta_x\tilde P(x)\\
\overline{p_x^*} &=(1-\delta_x)+ \delta_x\tilde P(x)
\end{align*}
which implies that 
\begin{align*}
\delta_x &=1-(\overline{p_x^*} -\underline{p_x^*}) \\
\tilde P(x)&=\underline{p_x^*} \cdot \frac{1}{\delta_x}
\end{align*}
\end{proof}
The intuition for this result is simple. Because the original propensity score $P(Z,x)$, for any fixed $x$, is supported on the unit interval, the observed support $\mathcal P^*_x=[\underline{p_x^*} , \overline{p_x^*} ]$ will contain enough information to identify $\delta_x$. This is summarized Figure \ref{fig:MTE_graph}.

\begin{figure}
    \centering
    \includegraphics[width=\textwidth]{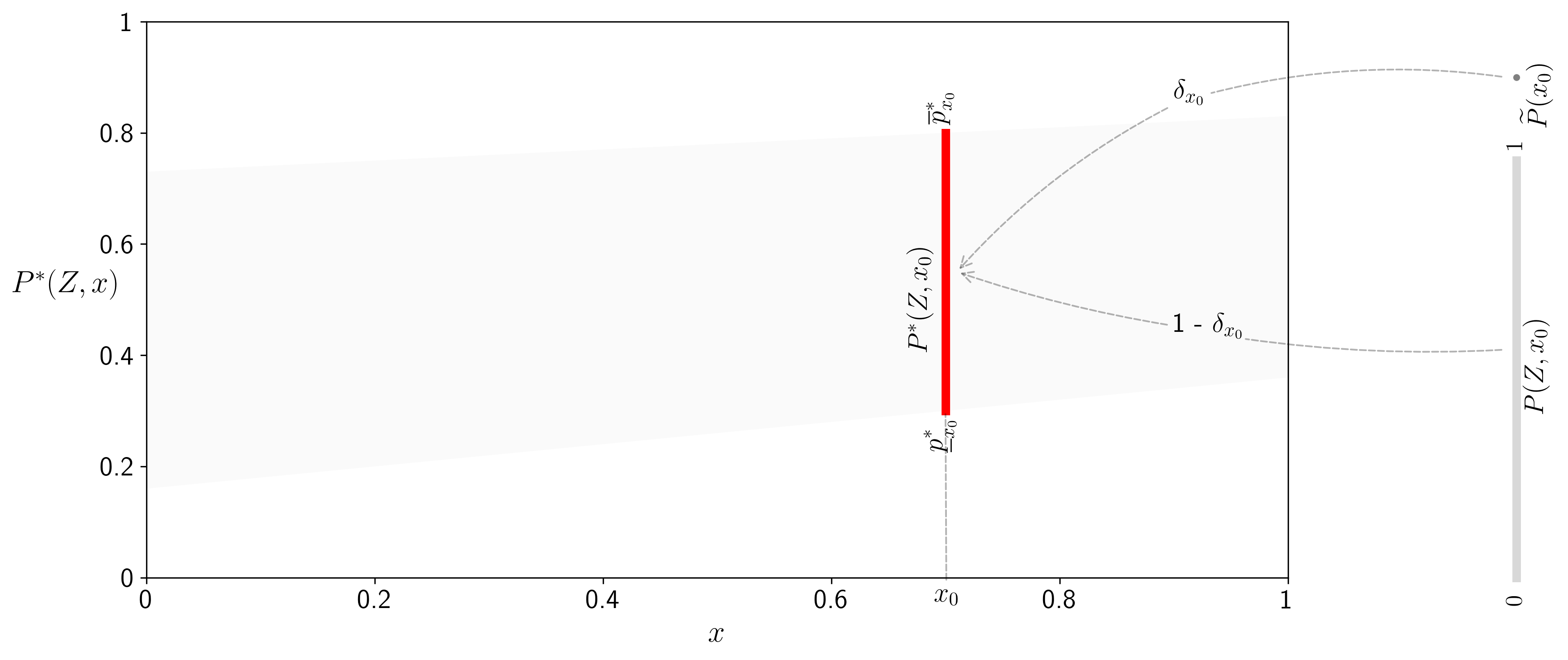}
    \caption{Identifying $\delta_x$: The figure shows the link between the non-responders propensity score, the proportion of non-responders and the observed propensity score. Because the non-responders propensity score does not vary with the instrument $Z$ and $supp(P(Z,x)) =[0,1]$ the $\delta_x$ can be recovered from observing the discrepancy from the observed support $P^*(Z,x)$ and $[0,1]$. The picture shows one of those points, $x_0$. }
    \label{fig:MTE_graph}
\end{figure}
Having identified $\delta_x$, then we use Equation \eqref{eq:mte_mte} to identify the \text{MTE} curve.

\begin{corollary}\label{cor:mte_id}
Let Assumptions \ref{Assumption_type}, \ref{Assumption_heckman}, and \ref{full_support} hold. Then, the \text{MTE} curve is identified:
\begin{align*}
\text{MTE}(v,x)=(\overline{p_x^*} -\underline{p_x^*} ) \text{MTE}^*\left ( (\overline{p_x^*} -\underline{p_x^*} )v+\underline{p_x^*},x;1-(\overline{p_x^*} -\underline{p_x^*})\right ) \text{ for }v\in \mathcal P_x = [0,1].
\end{align*}
where $\underline{p_x^*} = \inf \mathcal P^*_x$ and $ \overline{p_x^*} = \sup \mathcal P^*_x.$
\end{corollary}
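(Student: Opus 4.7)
The corollary should follow immediately by combining Lemma \ref{lemma:bias_mte} with Proposition \ref{prop:identification of delta_x}. The plan is to start from the location-scale identity established in Lemma \ref{lemma:bias_mte} and replace the unobserved quantities $\delta_x$ and $\tilde{P}(x)$ with the observable functions of the endpoints of $\mathcal{P}_x^*$ supplied by the Proposition.

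First, I would invoke Lemma \ref{lemma:bias_mte}, which under Assumptions \ref{Assumption_type} and \ref{Assumption_heckman} gives, for every $v \in \mathcal{P}_x$,
$$\text{MTE}(v,x) = (1-\delta_x)\,\text{MTE}^*\bigl((1-\delta_x)v + \delta_x \tilde{P}(x), x; \delta_x\bigr).$$
Under Assumption \ref{full_support} we have $\mathcal{P}_x = [0,1]$, so this equation holds for all $v \in [0,1]$.

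Next, I would apply Proposition \ref{prop:identification of delta_x} to rewrite the scale factor as $1-\delta_x = \overline{p_x^*} - \underline{p_x^*}$, the shift as $\delta_x \tilde{P}(x) = \underline{p_x^*}$ (this is literally the first equation of the linear system solved inside the proof of the Proposition), and the final argument as $\delta_x = 1 - (\overline{p_x^*} - \underline{p_x^*})$. Substituting these three identities into the display above produces exactly the expression claimed in the corollary.

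The argument contains no real obstacle beyond this mechanical substitution; the only point worth verifying is that the argument $(\overline{p_x^*} - \underline{p_x^*})v + \underline{p_x^*}$ plugged into $\text{MTE}^*$ actually lies in its domain $\mathcal{P}_x^* = [\underline{p_x^*}, \overline{p_x^*}]$. This is immediate, since $v \mapsto (\overline{p_x^*} - \underline{p_x^*})v + \underline{p_x^*}$ is an affine bijection from $[0,1]$ onto $[\underline{p_x^*}, \overline{p_x^*}]$, so the composition is well-defined for every $v \in \mathcal{P}_x = [0,1]$ and the corollary is established.
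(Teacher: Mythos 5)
Your proof is correct and follows exactly the route the paper intends: the paper gives no explicit proof of the corollary, stating only that one identifies $\delta_x$ via Proposition \ref{prop:identification of delta_x} and then substitutes into the location-scale identity of Lemma \ref{lemma:bias_mte}, which is precisely your mechanical substitution of $1-\delta_x = \overline{p_x^*}-\underline{p_x^*}$ and $\delta_x\tilde{P}(x)=\underline{p_x^*}$. Your added check that the argument of $\text{MTE}^*$ lands in $\mathcal{P}_x^*$ is a small bonus the paper leaves implicit.
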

This corollary provides the correct ``de-biasing'' to be performed on the observed MTE curve to match the true MTE curve. However, it is possible to recover parameters that are based on the MTE curve \emph{without} having to recover the MTE curve in the first place. We provide two examples.
\begin{example}[LATE]\label{example_late}
Consider the $\text{LATE}$, for $P(x,z')<P(x,z)$ with $z,z'\in\mathcal Z$, which can be obtained from MTE curve as
\begin{align*}
\text{LATE}(x, P(x,z), P(x,z'))=\frac{1}{P(x,z)-P(x,z')}\int_{P(x,z')}^{P(x,z)}\text{MTE}(u,x)du.
\end{align*}
Under misspecification, for the same $z,z'\in\mathcal Z$, we have
\begin{align*}
\text{LATE}^*(x, P^*(x,z), P^*(x,z'))&=\frac{1}{P^*(x,z)-P^*(x,z')}\int_{P^*(x,z')}^{P^*(x,z)}\text{MTE}^*(u,x;\delta_x)du\\
&=\frac{(1-\delta_x)^{-1}}{P(x,z)-P(x,z')}\int_{(1-\delta_x)P(x,z')+\delta_x\tilde P(x)}^{(1-\delta_x)P(x,z)+\delta_x\tilde P(x)}\frac{1}{1-\delta_x}\\ 
&\times \text{MTE}\left(\frac{u-\delta_x\tilde P(x)}{1-\delta_x},x\right)du.
\end{align*}
Note that to go from $\text{MTE}^*$ to $\text{MTE}$ we used Lemma \ref{lemma:bias_mte}. We did not use Corollary \ref{cor:mte_id}. Defining the change of variables $\tilde u = \frac{u-\delta_x\tilde P(x)}{1-\delta_x}$, we get $(1-\delta_x)d\tilde u = du.$ We then write
\begin{align*}
\text{LATE}^*(x, P^*(x,z), P^*(x,z'))&=\frac{(1-\delta_x)^{-1}}{P(x,z)-P(x,z')}\int_{(1-\delta_x)P(x,z')+\delta_x\tilde P(x)}^{(1-\delta_x)P(x,z)+\delta_x\tilde P(x)}\frac{1}{1-\delta_x}\\ &\times \text{MTE}\left(\frac{u-\delta_x\tilde P(x)}{1-\delta_x},x\right)du\\
&=\frac{(1-\delta_x)^{-1}}{P(x,z)-P(x,z')}\int_{P(x,z')}^{P(x,z)} \text{MTE}(u,x)du\\
&=\frac{1}{1-\delta_x}\text{LATE}(x, P(x,z), P(x,z')).
\end{align*}
Now, since $\delta_x =1-(\overline{p_x^*} -\underline{p_x^*})$ by Proposition \ref{prop:identification of delta_x}, the explicit de-biasing is achieved by
\begin{align*}
(\overline{p_x^*}-\underline{p_x^*})\text{LATE}^*(x, P^*(x,z), P^*(x,z'))&=\text{LATE}(x, P(x,z), P(x,z')).
\end{align*}
The left hand side can be computed from the data.
\end{example}

\begin{example}[MPRTE]\label{example_mprte}
The marginal policy relevant treatment effect (MPRTE) is an average of the $\text{MTE}(u,x)$ along the margin of indifference: when $U_D=P(X,Z)$. It is given by
\begin{align*}
\text{MPRTE}(x) = \int_{\mathcal Z} \text{MTE}(P(x,z),x)\frac{\partial P(x,z)}{\partial z} \left(E\left[\frac{\partial [P(x,Z)]}{\partial z}\right]\right)^{-1}f_{Z|X}(z|x)dz
\end{align*}
Then, using Equations \eqref{eq:der_pz} and \eqref{eq:equivalence_2} we get
\begin{align*}
\text{MPRTE}^*(x) &= \int_{\mathcal Z} \text{MTE}^*(P^*(x,z),x;\delta_x)\frac{\partial P^*(x,z)}{\partial z} \left(E\left[\frac{\partial [P^*(x,Z)]}{\partial z}\right]\right)^{-1}f_{Z|X}(z|x)dz\\
&= \int_{\mathcal Z} \frac{1}{1-\delta_x} \text{MTE}(P(x,z),x)\frac{\partial P(x,z)}{\partial z} \left(E\left[\frac{\partial [P(X,Z)]}{\partial z}\right]\right)^{-1}f_{Z|X}(z|x)dz\\
&=\frac{1}{1-\delta_x}\text{MPRTE}(x).
\end{align*}
Thus, again, by Proposition \ref{prop:identification of delta_x}, we obtain
\begin{align*}
(\overline{p_x^*}-\underline{p_x^*})\text{MPRTE}^*(x) &=\text{MPRTE}(x) .
\end{align*}

\end{example}

In the previous examples, proceeding as if there were no misspecification, yields biased parameters. Thus, the automatic ``de-biasing'' in CATE is the exception rather than the rule.

\section{Bounds under limited support}\label{sec:bounds}
Instead of assuming full support, now we allow for limited support of the propensity score $P(x,Z)$, but we still require that it is an interval.
\begin{assumption}\label{limited_support}\textbf{Limited Support.} The support of $P(x,Z)$ is $\mathcal P_x=[\underline{p_x},\overline{p_x}]\subset[0,1]$.
\end{assumption}
\noindent Under Assumption \ref{limited_support}, and using \eqref{eq:obs_ps}, we have that the observed support of $P^*(X,Z)$ is
\begin{align*}
[\underline{p_x^*} , \overline{p_x^*} ]= [(1-\delta_x)\underline{p_x} + \delta_x\tilde P(x), (1-\delta_x)\overline{p_x} + \delta_x\tilde P(x)].
\end{align*}
Taking the difference we obtain that $\overline{p_x^*} - \underline{p_x^*}=(1-\delta_x)(\overline{p_x} -\underline{p_x} )$. Since $\overline{p_x} -\underline{p_x}\leq 1$, then $\overline{p_x^*} - \underline{p_x^*}\leq (1-\delta_x)$, so that a lower bound for $\delta_x$ is $\delta_x\geq 1-(\overline{p_x^*} - \underline{p_x^*})$.

In general, it is not possible to provide an upper bound for $\delta_x$. This is similar to the case of misclassification. Following that literature (see Assumption 4 in \cite{kedadni2021}, and references therein), we assume it is known that for some $\overline \delta_x$: $\delta_x\leq \overline \delta_x<1$. Thus, we can write $1-(\overline{p_x^*} - \underline{p_x^*})\leq \delta_x\leq \overline\delta_x.$ The correction factor in Examples \ref{example_late} and \ref{example_mprte} is $(1-\delta_x)$. Now, it bounded by $ 1-\overline \delta_x \leq 1-\delta_x\leq \overline{p_x^*} - \underline{p_x^*}$. Thus, we can bound both LATE and MPRTE using this:
\begin{align*}
(1-\overline\delta)\text{LATE}^*(x, P^*(x,z), P^*(x,z')) &\leq \text{LATE}(x, P(x,z), P(x,z'))\\&\leq  (\overline{p_x^*} - \underline{p_x^*})\text{LATE}^*(x, P^*(x,z), P^*(x,z')),
\end{align*}
and
\begin{align*}
(1-\overline\delta)\text{MPRTE}^*(x)\leq \text{MPRTE}(x) \leq  (\overline{p_x^*} - \underline{p_x^*})\text{MPRTE}^*(x).
\end{align*}
Naturally, if $\overline \delta_x$ is not known, we can only provide upper bounds.

Again, we stress that is not necessary to bound the MTE curve in the first place. Such a bound can be complicated to obtain since, by Lemma \ref{lemma:bias_mte}, $\delta_x$ enters in three different ways in the observed MTE curve.

\section{Misspecification as a weak instrument}\label{sec:weak_iv}
We can frame our model as the triangular scheme of \cite{stockstaiger1997} and consider a sequence $\left\{\delta_{x,n}\right\}_{n=1}^{\infty}$ such that $\lim_{n\to\infty}\delta_{x,n}=1$ at a certain rate as $n\to\infty$. Thus, as $n\to\infty$, the instrument becomes irrelevant in the model. A possible indicator of the presence of a large value of $\delta_{x,n}$ can be the average derivative of the observed propensity score. This equals an attenuated version of the average derivative of the true propensity score. For a given value of $\delta_{x,n}$, by equation \eqref{eq:der_pz}, we have
\begin{align*}
E\left[ \frac{\partial P^*(x,Z)}{\partial z}  \right] = (1-\delta_{x,n})E\left[ \frac{\partial P(x,Z)}{\partial z}  \right]
\end{align*}
Thus a ``small'' value can be an indication that $\delta_{x,n}$ is close to 1. This is similar to a first stage regression in the linear model. We take the derivative with respect to $z$ to get rid of the propensity score that does not respond to $Z$. We average, because this likely to be a non-linear expression. Thus, $(1-\delta_{x,n})$ can be thought of as the counterpart of $C/\sqrt T$ in the notation of \cite{stockstaiger1997}. Indeed, define
\begin{align*}
Cov_x(Z,D^*):=E[ZD^*|X=x] -E[Z|X=x]E[D^*|X=x]. 
\end{align*}
We have
\begin{align*}
E[ZD^*|X=x] &= E[ZSD|X=x] + E[Z(1-S)\tilde D|X=x]\\
&=E[ZSD|X=x] + E[Z|X=x]E[(1-S)\tilde D|X=x]
\end{align*}
and
\begin{align*}
E[D^*|X=x] = E[SD|X=x] + E[(1-S)\tilde D|X=x]
\end{align*}
Thus,
\begin{align*}
Cov_x(Z,D^*)&= E[ZSD|X=x] - E[Z|X=x]E[SD|X=x]\\
&+ E[Z|X=x]E[(1-S)\tilde D|X=x] -E[Z|X=x]E[(1-S)\tilde D|X=x] \\
&=Cov_x(Z,SD)
\end{align*}
which is the covariance between the instrument and treatment status for the responders with $X=x$. To see the role of the rate at which $\delta_{x,n}$ converges to 1, suppose for a second that we know the functional form of $P^*(x,Z)$, and we estimate the average derivative using a sample mean:
\begin{align*}
\hat E\left[ \frac{\partial P^*(x,Z)}{\partial z}  \right]  = \frac{1}{n}\sum_{i=1}^n \frac{\partial P^*(x,Z_i)}{\partial z} =(1-\delta_{x,n}) \frac{1}{n}\sum_{i=1}^n \frac{\partial P(x,Z_i)}{\partial z}
\end{align*}
Then
\begin{align*}
\hat E\left[ \frac{\partial P^*(x,Z)}{\partial z}  \right]  - E\left[ \frac{\partial P^*(x,Z)}{\partial z}  \right] = (1-\delta_{x,n})\left ( \frac{1}{n}\sum_{i=1}^n \frac{\partial P(x,Z_i)}{\partial z}-E\left[ \frac{\partial P(x,Z)}{\partial z}  \right]\right)
\end{align*}

In order to investigate possible discontinuities in the limiting distributions, we follow \cite{kuersteiner2002}, and we let $(1-\delta_{x,n})=n^{\nu_{x}}$, for $\nu_{x}<0$. We obtain
\begin{align*}
\hat E\left[ \frac{\partial P^*(X,Z)}{\partial z}  \right]  - E\left[ \frac{\partial P^*(X,Z)}{\partial z}  \right] = O_p(n^{\nu_{x}-1/2}).
\end{align*}
Then, we obtain a degenerate limit:
\begin{align*}
\sqrt n\left(\hat E\left[ \frac{\partial P^*(X,Z)}{\partial z}  \right]  - E\left[ \frac{\partial P^*(X,Z)}{\partial z}  \right]\right) = o_p(1)
\end{align*}

Now consider the MPRTE. Recall that, by Example \ref{example_mprte}, under the full support guaranteed by Assumption \ref{full_support}, 
\begin{align*}
n^{\nu_{x}}\text{MPRTE}^*(x) =\text{MPRTE}(x).
\end{align*}
Assume that, if $\delta_x=0$, there exists $\hat{\text{MPRTE}}(x)$, a $\sqrt n$-consistent estimator of $\text{MPRTE}(x)$ such that
\begin{align*}
\hat{\text{MPRTE}}^*(x)-\text{MPRTE}^*(x) = n^{-\nu_{x}}\left (\hat{\text{MPRTE}}(x) - \text{MPRTE}(x) \right).
\end{align*}
Thus, if $\nu_{x}=-1/2$, then $\hat{\text{MPRTE}}^*(x)$ does not converge in probability. In future work, we will use these results to construct confidence intervals for the parameters of interest.

\section{Conclusion}\label{sec:conclusion}
In this paper we use the MTE framework to model a proportion of individuals who do not respond to the incentives of the instrumental variable. We show that in the special case where the observed propensity score is fully supported on the unit interval, i) the CATE is automatically identified regardless of the non-responders, and ii) we can identify the proportion of non-responders and use it to recover the MTE curve, and we can recover any parameter associated with it. We show that for some parameters, such as LATE and MPRTE, it is even possible to bypass the recovery of the MTE curve, and directly recover these parameters. Moreover, if the propensity has limited support, we find bounds for the LATE, the MPRTE, and the MTE curve. When we let the proportion of non-responders approach 1 at a certain rate, the framework resembles that of weak instruments. In future research we hope to leverage the results in this literature to construct valid confidence intervals for the MTE curve and related parameters.

\bibliographystyle{econometrica}
\bibliography{bibliography}

@article{urzua2006,
	Author = {Heckman, James J. and Urzua, Sergio and Vytlacil, Edward},
	Date-Added = {2021-07-10 01:29:42 +0000},
	Date-Modified = {2021-07-10 01:41:56 +0000},
	Journal = {The Review of Economics and Statistics},
	Number = {3},
	Pages = {389-432},
	Title = {Understanding Instrumental Variables in Models with Essential Heterogeneity},
	Volume = {88},
	Year = {2006}}

@article{Briggs2020,
	Author = {Briggs, Joseph and Caplin, Andrew and Leth-Petersen, S{\o}ren and Tonetti, Christopher and Violante, Gianluca},
	Date-Added = {2021-09-27 21:24:15 +0000},
	Date-Modified = {2021-09-27 21:27:00 +0000},
	Note = {Working Paper},
	Title = {Estimating Marginal Treatment Effects with Survey Instruments},
	Year = {2020}}

@article{kuersteiner2002,
	Author = {Jinyong Hahn and Guido Kuersteiner},
	Date-Added = {2021-06-29 22:59:43 +0000},
	Date-Modified = {2021-06-29 23:00:29 +0000},
	Journal = {Economics Letters},
	Pages = {325-331},
	Title = {Discontinuities of weak instrument limiting distributions},
	Volume = {75},
	Year = {2002}}

@article{stockstaiger1997,
	Author = {Douglas Staiger and James H. Stock},
	Date-Added = {2021-06-29 22:58:05 +0000},
	Date-Modified = {2021-06-29 22:58:58 +0000},
	Journal = {Econometrica},
	Number = {3},
	Pages = {557-586},
	Title = {Instrumental Variables Regression with Weak Instruments},
	Volume = {65},
	Year = {1997}}

@unpublished{vitor2021,
	Author = {Possebom, Vitor},
	Date-Added = {2021-06-29 22:53:56 +0000},
	Date-Modified = {2021-06-29 23:10:02 +0000},
	Note = {Working Paper},
	Title = {Crime and Mismeasured Punishment: Marginal Treatment Effect with Misclassification},
	Year = {2021}}

@unpublished{kedadni2021,
	Author = {Acerenza, Santiago and Ban, Kyunghoon and D{\'e}sir{\'e} Kedagni},
	Date-Added = {2021-06-29 22:50:56 +0000},
	Date-Modified = {2021-06-29 22:53:54 +0000},
	Note = {Working Paper},
	Title = {Marginal Treatment Effects with Misclassified Treatment},
	Year = {2021}}

@article{Bjorklund1987,
	Author = {Bjorklund, Anders and Moffitt, Robert},
	Date-Added = {2021-06-29 22:47:34 +0000},
	Date-Modified = {2021-06-29 22:47:34 +0000},
	File = {:Users/julianmartineziriarte/OneDrive - UC San Diego/UQR/Papers/Bjorklund and Moffitt 1987.pdf:pdf},
	Journal = {The Review of Economics and Statistics},
	Mendeley-Groups = {UQR},
	Number = {1},
	Pages = {42--49},
	Title = {{The Estimation of Wage Gains and Welfare Gains in Self-Selection}},
	Volume = {69},
	Year = {1987}}

@incollection{Heckman2001,
	Author = {Heckman, James J. and Vytlacil, Edward},
	Booktitle = {Nonlinear Statistical Modeling: Proceedings of the Thirteenth International Symposium in Economic Theory and Econometrics: Essays in Honor of Takeshi Amemiya},
	Date-Added = {2021-06-29 22:47:34 +0000},
	Date-Modified = {2021-06-29 23:34:27 +0000},
	Editor = {Hsiao, C. and Morimune, K. and Powell, J.},
	File = {:Users/julianmartineziriarte/OneDrive - UC San Diego/UQR/Papers/Heckman Vytlaclil (2000).pdf:pdf},
	Mendeley-Groups = {UQR},
	Pages = {1-46},
	Publisher = {Cambrigde University Press},
	Title = {{Local Instrumental Variables}},
	Year = {2001}}

@article{Heckman2005,
	Author = {Heckman, James J. and Vytlacil, Edward},
	Date-Added = {2021-06-29 22:47:34 +0000},
	Date-Modified = {2021-06-29 22:47:34 +0000},
	Journal = {Econometrica},
	Number = {3},
	Pages = {669--738},
	Title = {{Structural Equations, Treatment Effects, and Econometric Policy Evaluation}},
	Volume = {73},
	Year = {2005}}

@article{mogstad2018,
	Author = {Mogstad, Magne and Torgovitsky, Alexander},
	Date-Added = {2021-07-10 15:22:31 +0000},
	Date-Modified = {2021-07-10 15:37:00 +0000},
	Journal = {Annual Review of Economics},
	Pages = {577-613},
	Title = {Identification and Extrapolation of Causal Effects with Instrumental Variables},
	Volume = {10},
	Year = {2018}}

\end{document}